\newtheorem{definition}{Definition}
\newtheorem{example}{Example}
\newtheorem{remark}{Remark}
\newtheorem{theorem}{Theorem}
\newtheorem{proposition}{Proposition}
\newtheorem{lemma}{Lemma}
\newcommand{\ii}{\mathbf{i}}
\newcommand{\XX}{\mathbf{X}}
\newcommand{\NN}{\mathbb{N}}
\newcommand{\ZZ}{\mathbb{Z}}
\title{Codes over Affine Algebras with a Finite Commutative Chain coefficient Ring}
\author[1]{E. Mart\'{\i}nez-Moro\footnote{Partially funded by MINECO MTM2015-65764-C3-1-P.
}}
\author[1]{A. Pi\~nera-Nicol\'as\footnote{Partially supported by MINECO-13-MTM2013-45588-C3-1-P, and Principado de Asturias Grant GRUPIN14-142.}}
\author[2]{I.F. R\'ua$^\dag$}
\affil[1]{Institute of Mathematics (IMUVa),
Universidad de Valladolid\\
\tt{edgar.martinez@uva.es, alejandro.pinera@uva.es} }
\affil[2]{Departamento de Matem\'aticas, Universidad de Oviedo\\
\tt{rua@uniovi.es}}
\begin{document}

\maketitle

\begin{abstract} 

We consider codes defined over an affine algebra $\mathcal A=R[X_1,\dots,X_r]/\left\langle t_1(X_1),\dots,t_r(X_r)\right\rangle$, where $t_i(X_i)$ is a monic univariate polynomial over a finite commutative chain ring $R$. Namely, we study the $\mathcal A-$submodules of $\mathcal A^l$ ($l\in \mathbb{N}$). These codes generalize both the codes over finite quotients of polynomial rings and the multivariable codes over finite chain rings. {Some codes over Frobenius local rings that are not chain rings are also of this type}. A canonical generator matrix for these codes is introduced with the help of the Canonical Generating System. Duality of the codes is also considered.

\textbf{Keywords:} Finite Commutative Chain Ring; Affine Algebra; Multivariable Codes; Quasi-cyclic codes; {Codes over non-chain local Frobenius rings}.

\textbf{AMS classification:} {11T71, 94B99,  81P70, 13M10}

\end{abstract}

\section{Introduction}

{Quasi-cyclic codes over a finite {commutative} chain ring $R$ can be represented as $(R[x]/\langle x^n-1\rangle)$-submodules of $(R[x]/\langle x^n-1\rangle)^l$,  generalizing the well known construction for finite fields in, for example, \cite{Cao2013}. For finite  {commutative} chain ring, the one-generator codes have been extensively studied (see, for example, the classical paper \cite{Norton-Salagean} and the references included in \cite{Gao2016}), whereas for finite fields the general situation was studied in \cite{Lally2001} and recently generalized in \cite{berger} to codes over finite quotients of polynomial rings, i.e., to $\mathbb F[x]/\langle f(x)\rangle$-submodules of $(\mathbb F[x]/\langle f(x)\rangle)^l$ where $l\in \mathbb{N}$ and $f(x)$ is a monic polynomial. Furthermore, Jitman and Ling studied quasi-abelian codes over finite fields using techniques based on the Discrete Fourier Transform. They also  gave a structural characterization, as well as an enumeration, of one-generator quasi-abelian codes in \cite{Jitman2015}.} 

{In this paper we will consider codes defined over an affine algebra $\mathcal A=R[X_1,\dots,X_r]/\left\langle t_1(X_1),\dots,t_r(X_r)\right\rangle$, where each $t_i(X_i)$ is a monic univariate polynomial over a finite commutative chain ring $R$, i.e., $\mathcal A-$submodules of $\mathcal A^l$. Therefore, this class of codes includes the codes defined in \cite{berger}  and also, when $l=1$, multivariable codes over finite {commutative} chain rings \cite{nuestro1,nuestro2}. The proposed approach,  which uses the concept of Canonical Generating System introduced in \cite{CGS}, allows the study of quasi-cyclic codes over a finite {commutative} chain ring and their multivariable generalizations in a broader polynomial way, that is, beyond the one-generator case. Notice that for several generators a trace representation can also be derived from the ideas in \cite{ling}, see for example Section~4 in \cite{Gao2016}.} {The former approach provides a way for defining codes over some Frobenius local rings which are not chain rings. Some of them, as can be seen in Examples~\ref{ex:1}}, \ref{Ex:Cod_1} and \ref{Ex:Cod_2}, have not been previously explored in the literature.

{The outline of the paper is as follows. In Section~\ref{sec:int} we state the basic facts on finite {commutative} chain rings and some examples of known families of codes that are included in our definition.  The construction of a canonical generator matrix for our codes is provided in Section~\ref{sec:cangen}.} {In the final section duality of these codes is considered.}

\section{Basic definitions and examples}\label{sec:int}

An associative, commutative, unital, finite ring $R$ is called \emph{chain ring} if it has a unique maximal ideal $M$ and it is principal (i.e, generated by an element $a$). This condition is equivalent \cite[Proposition 2.1]{CN-FCR} to the fact that the set of ideals of $R$ is the chain (hence its name) $\left\langle 0\right\rangle =\left\langle a^t \right\rangle \subsetneq \left\langle a^{t-1} \right\rangle \subsetneq \dots \subsetneq \left\langle a^1 \right\rangle= M \subsetneq \left\langle a^0 \right\rangle=R
$, where $t$ is the nilpotency index of the generator $a$. 
The quotient ring $\overline R=R/M$ is a finite field $\mathbb F_q$ where $q= p^d$ is a prime number power. Examples of finite {commutative} chain rings include Galois rings $GR(p^n,d)$ of characteristic $p^n$ and $p^{nd}$ elements (here $a=p$, and $t=n$) and, in particular, finite fields ($\mathbb{F}_q=GR(p,d)$) \cite{McD,bini}. Any element $r\in R$, can be written as $r=a^ir'$, where $0\le i\le t$ and $0\not=\overline{r'}\in\mathbb{F}_q$. The exponent $i$ is unique, and it is called the \emph{norm} of $r$, written $\|r\|$.

\emph{Affine algebras} with a finite commutative chain coefficient ring are quotients
$\mathcal A=R[X_1,\dots,X_r]/\left\langle t_1(X_1),\dots,t_r(X_r)\right\rangle
$, where $t_j(X_j)\in R[X_j]$ are monic polynomials of degree $n_j$. (i.e., {$|\mathcal A|=|R|^n$, where $n:=\prod_{j=1}^rn_j$}). Let us denote every element $f$ in the affine algebra $\mathcal A$ as $f=\sum f_{\ii}\XX^{\ii}$, where $\ii=(i_1,\dots,i_r)\in \mathbb{N}_0^r$, $f_\ii\in R$ and $\XX^\ii=X_1^{i_1}\dots X_r^{i_r}$. If we require $0\le i_j<n_j$ for all $1\le j\le r$, then the expresion is unique. We implicitely require this condition through the paper and identify the element $f\in\mathcal A$ with the polynomial $\sum f_{\ii}\XX^{\ii}\in R[X_1,\dots,X_r]$.

\begin{definition}
Let $l$ be a natural number. A \emph{linear code of index $l$ over the affine algebra $\mathcal A$} (an \emph{$\mathcal{A}-$code}) is an $\mathcal A-$submodule $C$ of the direct product $E=\mathcal A^l$. 
\end{definition}

For all $1\le i\le l$, we shall denote by $\pi_i(f_1,\dots,f_l)=f_i$ the canonical projection on the $i-$th component $\pi_i:E\to\mathcal A$.

\begin{example}\label{ex:1} We start giving some examples of codes that can be seen, according to our construction, as codes over affine algebras:
\begin{enumerate}
\item Ideals of $\mathcal A$ (i.e., linear codes of index $l=1$) are the \emph{multivariable codes} introduced in \cite{nuestro1,nuestro2}. As particular cases we have the multivariable codes over a finite field $\mathbb{F}_q$ \cite{Poli}, and well-known families of codes over a finite chain ring alphabet, such as \emph{cyclic} {($r=1,t_1(X_1)=X_1^{n_1}-1$), \emph{negacyclic} ($r=1,t_1(X_1)=X_1^{n_1}+1$), \emph{constancyclic} ($r=1,t_1(X_1)=X_1^{n_1}+\lambda$), \emph{polycyclic} ($r=1$) and \emph{abelian codes} ($t_i(X_i)=X_i^{n_i}-1,\forall i=1,\dots,r$) \cite{CN-FCR,SergioSteve}. 
\item \emph{Quasi-cyclic} codes over a finite commutative chain ring are obtained when $r=1,t_1(X_1)=X_1^{n_1}-1$ and $l>1$. {See for example \cite{Cao2013, Gao2016} and the references therein.}}
\item {\emph{Quasi-abelian} codes over a finite field are obtained when $r>1$ and $t_i(X_i)=X_1^{n_i}-1$, $i=1,\ldots, r$ and $l>1$  \cite{Jitman2015}.} 
\item The \emph{codes over finite quotients of polynomial rings} studied in \cite{berger} are particular cases when the coefficient ring $R$ is the finite field $\mathbb{F}_q$ and the polynomial ring is univariate, i.e., $r=1$.
\item Codes over non-chain rings of order 16, $\mathbb F_2[u,v]/\langle  u^2,v^2\rangle$ and $\mathbb Z_4[x]/\langle x^2\rangle$ of \cite{Martinez-Moro2015} and codes over the family of rings $R_{k}= \mathbb F_2[u_1,\ldots,u_k]/\langle  u_1^2,\ldots,u_k^2\rangle$ studied in \cite{DOUGHERTY2011205}.
\end{enumerate}
\end{example}

\begin{remark}
Notice that some kind of codes over Frobenius local rings can be described using codes over affine algebras. In particular, in some of them the alphabet is not a chain ring and the usual matricial approach over those rings can not be used. This is the case of the \emph{linear codes over the ring $\mathbb{Z}_4[x]/\langle x^2+2x\rangle$} studied by Mart\'\i nez-Moro, Szabo and Yildiz in \cite{MoroSzaboYildiz}, which can be obtained when $R=\mathbb{Z}_4,r=1,t_1(X_1)=X_1^2+2X_1$ and $l>1$. Some of these codes give Gray images which are extremal Type \emph{II} $\mathbb{Z}_4-$codes, as it can be seen in Example \ref{Ex:3.9_MMSY}.
\end{remark}

\begin{remark}
Observe that any $\mathcal A-$code $C$ is also an $R-$module. However, $C$ is non necessarily free neither as $\mathcal A-$module nor as $R-$module. This makes a difference with codes over the finite quotients of (univariate) polynomial rings (over finite fields),  which are always $\mathbb{F}_q-$vector spaces \cite[page 166]{berger}. 
\end{remark}

\begin{definition}\label{def:Phi}
Let $\preceq$ be an admissible monomial order on $\mathbb{N}_0^r$, i.e., a
monomial order such that $\mathbf{0}\preceq
\mathbf{i}$ and 
$\mathbf{i}+\mathbf{k}\preceq
\mathbf{j}+\mathbf{k},$ for all
$\mathbf{i},\mathbf{j},\mathbf{k}\in\NN_0^r$ with $\mathbf{i}\preceq \mathbf{j}$ \cite{CGS}. For any $f\in \mathcal{A}$ we can associate an element $\phi(f)\in R^{n}$ by listing its coefficients in the order given by $\preceq$. The map $\phi:\mathcal{A}\to R^n$ can be extended to $\Phi:E\to R^{nl}$ coordinatewise, and the \emph{$R-$image} of an $\mathcal{A}-$code $C$ is defined as the $R-$linear code $\Phi(C)\subseteq R^{nl}$.
\end{definition}

From now on, we shall assume that {an} admissible monomial order $\preceq$ is fixed (e.g., the lexicographic, graded lex or graded reverse lex orders \cite{cox}). If $w$ is a weight function in the ring $R$ (for instance, the Hamming weight when $R=\mathbb{F}_q$ or the Lee weight if $R=\ZZ/4\ZZ$), then it can be used to induce the weight $w^{nl}:E\to \mathbb{R}$, given by $w^{nl}(f_1,\dots,f_l)=\sum_{j=1}^{nl}w(\Phi(f_1,\dots,f_l)_j)$ \cite[Section 3]{Nechaev99}. 

It is natural to expect that the codes under study are generally not generated by a single codeword (observe that repeated-root multivariable codes are particular examples \cite{nuestro2,nuestro6}). So, following \cite{berger}, generators of an $\mathcal{A}-$code $C$ will be given in terms of a generator matrix.
 
\begin{definition}
Let $C$ be an $\mathcal{A}-$code, we shall say that $G\in\mathcal{A}^{k\times l}$ ($k\in\mathbb{N}$) is a \emph{generator matrix} of $C$ if the rows of $G$ generate $C$ as an $\mathcal{A}-$submodule of $E$, i.e., $C=\left\langle g^{(1)},\dots,g^{(k)}\right\rangle_{\mathcal{A}}=\{\sum_{i=1}^kf_ig^{(i)}\ |\ f_i\in\mathcal{A}\}$ (with $fg=(fg_1,\dots,fg_l)\in E$). A matrix $G\in \mathcal{A}^{K\times l}$ ($K\in \mathbb{N}$) will be called \emph{generator matrix over $R$} if the rows of $G$ generate $C$ as an $R-$submodule of $E$, i.e., $C=\left\langle g^{(1)},\dots,g^{(K)}\right\rangle_{R}=\{\sum_{i=1}^Kr_ig^{(i)}\ |\ r_i\in R\}$ (with $rg=(rg_1,\dots,rg_l)\in E$).
\end{definition}

It is clear that from every generator matrix $G$ of $C$, a generator matrix over $R$ can be obtained by a substitution of every row $g^{(i)}$ of $G$ by the set of rows $\{X_1^{m_1}\dots X_r^{m_r}g^{(i)}\ |\ 0\le m_i<n_i\}$.

\section{A canonical generator matrix for $\mathcal{A}-$codes}\label{sec:cangen}

In this section we show that any $\mathcal{A}-$code can be described by a generator matrix $G$ in a canonical form which generalizes that of \cite[Section 2]{berger} for codes over finite quotients of (univariate) polynomial rings (over finite fields), and that of \cite[Section 4]{nuestro2} for repeated-root multivariable codes over finite commutative chain rings. As in the latter, the \emph{Canonical Generating
Systems (CGS)} introduced in \cite{CGS} is the main tool in our study. We refer the reader to such a reference for details. 

Recall that in this paper we implicitely identify the elements in $\mathcal{A}$ with multivariate polynomials {$\sum f_{\ii}\XX^{\ii}$ with $0\le i_j<n_j$} for all $1\le j\le r$. If $f=\sum f_{\ii}\XX^{\ii}\in \mathcal{A}
$ and $\ii\in \mathbb{N}_0^r$, then $\hbox{Cf}(f,\XX^\ii)$ denotes the coefficient $f_\ii$. The set $\mathfrak{S}(f)=\{\ii\in\mathbb{N}_0^r\ |\ f_\ii\not=0\}$ is the \emph{support} of $f$, and the \emph{norm} of $f$ is $\|f\|=\max_{\ii}\{\|f_\ii\|\}$. The \emph{leading degree} of $f$, $\hbox{ldg}(f)$, is the maximal $\ii$ (w.r.t. $\preceq$) such that $\|f_\ii\|=\|f\|$. The \emph{leading term} (resp. \emph{coefficient, monomial}) is defined as $\hbox{Lt}(f)=\XX^{\hbox{ldg}(f)}$ (resp. $\hbox{Lc}(f)=\hbox{Cf}(f,\hbox{Lt}(f)),\hbox{Lm}(f)=\hbox{Lc}(f)\hbox{Lt}(f)$).

If $f,h,g\in \mathcal{A}$, we say that $f$ is \emph{$L-$reduced to $h$ mod $g$} if there exists $r\in R$ with the condition $0\not=r{\hbox{Lc}(g)}=\hbox{Cf}(f,\XX^\ii\hbox{Lt}(g))$, such that $h=f-r\XX^\ii g$. In general, given a subset $\chi\in \mathcal{A}$, we say that $f$ is \emph{$L-$reduced to $h$ mod $\chi$} if there exists a sequence of {$L-$reductions} of $h_i$ to $h_{i+1}$ mod $g_i\in \chi$ ($i=1,\dots,w-1$) with $h_0=f,h_w=h$.

The polynomial $f\in\mathcal{A}$ is called \emph{normal mod $\chi\subseteq \mathcal{A}$}, if it can not be $L-$reduced mod $\chi$. The set of all {polynomials} normal mod $\chi$ is denoted $N(\chi)$. The polynomial $h\in\mathcal{A}$ is called a \emph{normal form} of $f$ mod $\chi$, if $h\in N(\chi)$ and $f$ is $L-$reduced to $h$ mod $\chi$. The set of all normal forms of $f$ mod $\chi$ is denoted $NF(f,\chi)$.

A
finite subset $\mathcal F\subset \NN_0^r$ is called \emph{Ferrers
diagram} if, given $\textbf{s}\in\mathcal F$, we have
$\mathbf{u}\in\mathcal F$ for all {$\mathbf{u}\le 
\mathbf{s}$ (i.e., $u_j\le s_j$, for all $1\le j\le r$)}. If $\mathcal {KF}$ denotes the set of
minimal elements in the partially ordered set {$(\NN_0^r\setminus
\mathcal F,\le)$}, then an \emph{$\mathcal F$-monic} polynomial
is a polynomial of the form
$
g^{\mathbf{s}}= \XX^\mathbf{s}-\sum_{\mathbf{u\in
\mathcal{F}}}g_{\mathbf{u}}\XX^{\mathbf{u}},
$
where
$\mathbf{s}\in \mathcal{KF}$. A set of $\mathcal F$-monic
polynomials of the form 
$
\chi=\{g^{\mathbf{s}}\ |\
\mathbf{s}\in \mathcal{KF}\}
$
 is called a
\emph{characteristic set}. If, besides, {$|\chi|=|\overline{\chi}|$ and} the set $\overline \chi$ is a reduced
Gr\"obner basis of the ideal $\left\langle \overline
\chi\right\rangle\subseteq \mathbb F_q[X_1,\dots,X_r]$, then $\chi$ is called a
\emph{Krull system}.

As an application of
\cite[Theorem 4.3]{CGS} to affine algebras, we have the following result, which generalizes \cite[Theorem 1]{berger} and also \cite[Theorem 2]{nuestro2}.

\begin{theorem}\label{matriz}
Let $C$ be a nonzero $\mathcal{A}-$code, and let
$\preceq$ be a fixed monomial order {on} $\NN_0^r$. Then, there exist:
\begin{enumerate}
    \item A natural number $1\le k\le l$, and two sequences of $k$ natural numbers ($1\le i_1<\dots<i_k\le l$ and $0\le x_1,\dots,x_k\le t-1$);
    \item $k$ strictly decreasing chain
of Ferrers diagrams $\mathcal F_0^j\supset
\mathcal F_1^j\supset \dots\supset \mathcal F_{x_j}^j$ and $k$ sequences of $x_j+1$ natural numbers 
$0=b_0^j<b_1^j<\dots<b_{x_j}^j<b_{x_j+1}^j=t$ (for all $1\le j\le k$);
\end{enumerate}
 such that for all $1\le j\le k$:
\begin{enumerate}
    \item {If $(f_1,\dots,f_l)\in C$, then for all $0\le m\le x_j$ such that
    $\mathfrak S(f_{i_j})\subseteq \mathcal F_m^j$, we have $\|f_{i_j}\|\ge b_{m+1}^j$;}
    \item There exists, for all $0\le m\le x_j$, a Krull system $\chi_m^j$
    of $\mathcal F_m^j$-monic polynomials such that $a^{b_m^j}\mathcal
    F_m\subseteq \pi_{i_j}(C_{i_j})$, where $C_{i_j}$ is the subcode of $C$ whose codewords have the first $i_j$ components equal to zero;
    \item The set $\chi^j=\chi_0^j\cup
a^{b_1^j}\chi_1^j\cup\dots\cup a^{b_{x_j}^j}\chi_{x_j}^j$ is a \emph{CGS} of $\pi_{i_j}(C_{i_j})$, i.e.,
\begin{enumerate}
\item The projection $\pi_{i_j}(C_{i_j})$ is the $\mathcal{A}-$submodule (i.e., the ideal of $\mathcal{A}$) generated by $\chi^j$;
\item {$f_{i_j}\in \pi_{i_j}(C_{i_j})$ if and only if the recurring sequence given by $h_0=f,h_{m+1}=NF(h_m,\chi^j_m)$ satisfies $\|h_m\|\ge b_m^j$, for all $0\le m\le x_j$, and $h_{m+1}=0$;}
    \item If we define $|\mathcal F_{-1}^j|=n(=\prod_{j=1}^r n_j)$, then
    $$|\pi_{i_j}(C_{i_j})|=q^{\sum_{m=0}^{x_j}(t-b_m^j)(|\mathcal F_{m-1}^j|-|\mathcal F_{m}^j|)}$$
    \item The code $C$ has $q^{\sum_{j=1}^k\sum_{m=0}^{x_j}(t-b_m^j)(|\mathcal F_{m-1}^j|-|\mathcal F_{m}^j|)}$ codewords.
    \end{enumerate}
    \item For all $0\le m\le x_j,g\in \chi^j_m$ and $z\in\{i_j,\dots,l\}$, there exist elements $h_z^{j,m,g}\in\mathcal{A}$ such that
    for each $s\in\{i_{j+1},\dots, i_k\}$ the polynomial $h_s^{j,m,g}$ is a normal form mod $\chi^s$ and the matrix with rows
    $$\overrightarrow{g_h}=\left(0\ ,\ {\dots}\ ,\ 0\ ,\ \stackrel{{(i_j}}{a^{b^j_m}g}\ ,\ h_{i_j+1}^{j,m,g}\ ,\ \dots\ ,\ h_{l}^{j,m,g}\right)$$
    is a generator matrix of $C$.
\end{enumerate}
\end{theorem}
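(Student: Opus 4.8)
The plan is to peel $C$ coordinate by coordinate into a sequence of ideals of $\mathcal{A}$, treat each ideal with the Canonical Generating System machinery of \cite{CGS}, and then reassemble a generator matrix of $C$ by lifting the ideal generators to codewords and tidying up their tails. Concretely, for $1\le i\le l$ I would put $C^{(i)}=\{(f_1,\dots,f_l)\in C\mid f_1=\dots=f_{i-1}=0\}$; this is the subcode denoted $C_{i}$ in the statement, with the understanding that the vanishing concerns the coordinates preceding the pivot. Each $C^{(i)}$ is an $\mathcal{A}$-submodule of $E$ (an intersection of $C$ with a coordinate submodule), giving a decreasing chain $C=C^{(1)}\supseteq C^{(2)}\supseteq\dots\supseteq C^{(l)}\supseteq C^{(l+1)}=0$. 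The restriction $\pi_i\colon C^{(i)}\to\mathcal{A}$ is $\mathcal{A}$-linear with kernel $C^{(i+1)}$, so $\pi_i(C^{(i)})$ is an ideal of $\mathcal{A}$ and $C^{(i)}/C^{(i+1)}\cong\pi_i(C^{(i)})$. Calling $i$ a \emph{pivot} when $\pi_i(C^{(i)})\neq 0$ and listing the pivots as $1\le i_1<\dots<i_k\le l$ (with $k\ge 1$ since $C\neq 0$), one has $C^{(i)}=C^{(i+1)}$ for non-pivots $i$, so in particular $C^{(i_k+1)}=0$; telescoping the chain gives $|C|=\prod_{i=1}^l|\pi_i(C^{(i)})|=\prod_{j=1}^k|\pi_{i_j}(C^{(i_j)})|$, which reduces the global count (3d) to the ideal-wise count (3c).

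Next I would apply \cite[Theorem 4.3]{CGS}, specialized to affine algebras as in this section, to each nonzero ideal $I_j:=\pi_{i_j}(C^{(i_j)})$ of $\mathcal{A}$. This supplies, for every $j$: the integer $x_j$ and exponents $0=b_0^j<\dots<b_{x_j}^j<b_{x_j+1}^j=t$; the strictly decreasing chain of Ferrers diagrams $\mathcal{F}_0^j\supset\dots\supset\mathcal{F}_{x_j}^j$; for each $m$ a Krull system $\chi_m^j$ of $\mathcal{F}_m^j$-monic polynomials with $a^{b_m^j}\mathcal{F}_m^j\subseteq I_j$; the CGS $\chi^j=\chi_0^j\cup a^{b_1^j}\chi_1^j\cup\dots\cup a^{b_{x_j}^j}\chi_{x_j}^j$ of $I_j$; the recursive normal-form membership criterion; and the cardinality formula $|I_j|=q^{\sum_{m=0}^{x_j}(t-b_m^j)(|\mathcal{F}_{m-1}^j|-|\mathcal{F}_m^j|)}$. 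The data of item (1), together with items (2) and (3a)--(3d), are then immediate, so the only remaining point in (1) is the norm inequality: if $f_{i_j}\in I_j$ has $\mathfrak{S}(f_{i_j})\subseteq\mathcal{F}_m^j$, then $f_{i_j}$ involves no monomial in $\mathcal{KF}_m^j$, hence admits no $L$-reduction modulo the part of $\chi^j$ of norm $<b_{m+1}^j$; as those generators generate exactly the sub-ideal of $I_j$ of elements of norm $<b_{m+1}^j$, membership forces $\|f_{i_j}\|\ge b_{m+1}^j$.

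Finally, for item (4) I would build the matrix row by row. For each pivot $i_j$, each $0\le m\le x_j$ and each $g\in\chi_m^j$, the element $a^{b_m^j}g$ lies in $I_j=\pi_{i_j}(C^{(i_j)})$, so there is a codeword of $C^{(i_j)}$ of the shape $(0,\dots,0,a^{b_m^j}g,h_{i_j+1},\dots,h_l)$; collect all of these into a matrix $G$. That $G$ generates $C$ follows by descending induction on $j$, the claim being that the rows with pivot in $\{i_j,\dots,i_k\}$ generate $C^{(i_j)}$: given $c\in C^{(i_j)}$, some $\mathcal{A}$-combination $d$ of the pivot-$i_j$ rows satisfies $\pi_{i_j}(d)=\pi_{i_j}(c)$ (their $i_j$-th components $\mathcal{A}$-span $I_j$), so $c-d\in C^{(i_j+1)}=C^{(i_{j+1})}$ is caught by the inductive hypothesis, the base case $j=k$ using $C^{(i_k+1)}=0$. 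To obtain the canonical shape, I would normalize, for each pivot column $s\in\{i_{j+1},\dots,i_k\}$, the $s$-th entry of each row with pivot before $s$ by $L$-reducing it modulo $\chi^s$: every reduction step subtracts a monomial multiple of a pivot-$s$ row, which has zeros in all columns before $s$, so it only alters columns $\ge s$ and never changes the $\mathcal{A}$-row span; processing the pivot columns $s$ in increasing order then preserves the normalizations already done in smaller columns, and each $L$-reduction terminates by the normal-form theory of \cite{CGS}, leaving each tail entry in a pivot column a normal form mod the corresponding $\chi^s$.

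The step I expect to cost the most care is this last one: keeping the bookkeeping of the column normalizations straight (processing pivot columns in increasing order so that clearing a column touches neither the pivot entries nor the previously normalized columns) and making explicit that each elementary move is simultaneously an $L$-reduction inside $\mathcal{A}$ and a row operation over $\mathcal{A}$, so that existence of the normal forms, termination, and span-preservation all come from \cite{CGS}. The remainder is a routine coordinate-by-coordinate transfer of the single-ideal structure theory through the filtration $C^{(1)}\supseteq\dots\supseteq C^{(l+1)}=0$.
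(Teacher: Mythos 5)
Your proposal is correct and follows essentially the same route as the paper: filter $C$ by the subcodes $C_i=\cap_{j<i}\pi_j^{-1}(0)$, take the jump indices of the chain, apply \cite[Theorem~4.3]{CGS} to each projection ideal $\pi_{i_j}(C_{i_j})$ to get the Krull systems/Ferrers data and cardinality formula, lift the CGS elements to codewords, show they generate $C$ by a descent along the filtration, and finally $L$-reduce the tail entries mod the later $\chi^s$ to reach the canonical shape. The small variations (telescoping for $|C|$, descending induction rather than explicit $L$-reduction of the codeword, and the explicit note that column normalizations must be processed in increasing pivot-column order) are just more detailed restatements of the same argument.
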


Before the proof, let us introduce an example to illustrate the theorem.

\begin{example}
Consider the ${\mathbb{Z}_8}[X_1,X_2]/\left<X_1^4,X_2^4\right>-$code $C$ of index $l=3$ generated by the following codewords:
$$\left\{(X_1^2,X_1^2,4X_2),(X_2^2,X_2^2,0),(X_1X_2,X_1X_2,4X_1),(2X_2,2X_2,0),(4X_1,4X_1,0)\right\}.$$
Let us obtain its canonical generator matrix w.r.t. the lexicographic monomial order. We begin with the computation of the CGS of the (ideal in ${\mathbb{Z}_8}[X_1,X_2]$ corresponding to the) projection $\pi_1(C)$, i.e., $\left<X_1^4,X_2^4,X_1^2,X_2^2,X_1X_2,2X_2,4X_1\right>$. It is $\chi^1=\chi^1_0\cup 2\chi^1_1\cup 4\chi^1_2$, where $\chi^1_0=\{X_1^2,X_2^2,X_1X_2\}$, $\chi^1_1=\{X_1^2,X_2\}$, $\chi^1_2=\{X_1,X_2\}$. And so the corresponding Ferrers diagrams are $\mathcal F_2^1=\{(0,0)\}\subseteq \mathcal F_1^1=\mathcal F_2^1\cup \{(1,0)\}\subseteq \mathcal F_0^1=\mathcal F_1^1\cup \{(0,1)\}$. Therefore, $$|\pi_1(C)|=2^{(3-0)(16-3)+(3-1)(3-2)+(3-2)(2-1)}=2^{42}.$$
Next, as $C_2=C_3$ (any codeword with a zero as first component {begins} with $(0,0)$) we proceed with the computation of the 
CGS of the (ideal in ${\mathbb{Z}_8}[X_1,X_2]$ corresponding to the) projection $\pi_3(C_3)$, i.e., $\left<X_1^4,X_2^4, {4(X_1^2+X_2^2),4X_1X_2,4X_2^3}\right>$. It is $\chi^2=\chi^2_0\cup 4\chi^2_1$, where $\chi^2_0=\{X_1^4,X_2^4\}$, $\chi^2_1=\{X_1^2+X_2^2,X_1X_2,X_2^3\}$. And so the corresponding Ferrers diagrams are $\mathcal F_2^1=\{(0,0),(1,0),(0,1),(0,2)\}\subseteq \mathcal F_1^0={\{0,1,2,3\}\times \{0,1,2,3\}}$. Therefore, $$|\pi_2(C)|=2^{(3-0)(16-16)+(3-2)(16-4)}=2^{12}.$$
Hence, the code has $2^{54}$ codewords and its canonical generator matrix is:
$$\left(\begin{array}{cc|c}X_1^2&X_1^2&4X_2\\X_2^2&X_2^2&0\\X_1X_2&X_1X_2&4X_1\\2X_1^2&2X_1^2&0\\2X_2&2X_2&0\\4X_1&4X_1&0\\4X_2&4X_2&0\\\hline0&0&4(X_1^2+X_2^2)\\0&0&4X_1X_2\\0&0&4X_2^3\end{array}\right).$$
Observe that the polynomials $X_1^4,X_2^4$ in the CGS do not correspond to any row in the matrix, as they are reduced to zero mod the defining ideal of $\mathcal A$. Notice also that the elements in the upper right corner are normal forms mod $\chi^2$.
\end{example}

\begin{proof}
For all $i=1,\dots,l+1$, let $C_i=\cap_{j=1}^{i-1}\pi_j^{-1}(0)$ be the subset of $C$ consisting of those codewords of the form $(0,\stackrel{(i-1}{\dots},0,g_{i},\dots,g_l)$. It is straightforward to check that $C=C_1\supseteq C_2\supseteq \dots \supseteq C_{l+1}=\{0\}$ is a decreasing chain of $\mathcal{A}-$(sub)codes of $C$. Let us refine this chain by preserving only those nonzero $C_i$ such that $C_i\not =C_{i+1}$, i.e., there exists a natural number $k\in\{1,\dots,l\}$ and a sequence of $k$ natural numbers $1\le i_1<\dots<i_k\le l$ such that 
$C=C_{i_1}\supsetneq C_{i_2}\supsetneq \dots \supsetneq C_{i_k}\not=\{0\}$ with the conditions $C_s=C_{i_{j+1}}$ for all $i_j<s\le i_{j+1}$ and for all $j\in\{1,\dots,k-1\}$ (i.e,. any codeword with its $s-1$ first components equal to zero, must have the following $i_{j+1}-s$ components equal to zero, too), $C_s=0$, for all $s>i_k$, and $C_s=C$, for all $s\le i_1$.

For each $1\le j\le k$, consider the projection $\pi_{i_j}(C_{i_j})\subseteq \mathcal{A}$. It is clear that it is an ideal of $\mathcal{A}$, because $C$ is an $\mathcal{A}-$code. Therefore, we can lift it to a unique ideal $I_{i_j}\lhd R[X_1,\dots,X_r]$ containing $\left\langle t_1(X_1),\dots, t_r(X_r)\right\rangle$ and such that $I_{i_j}/\left\langle t_1(X_1),\dots, t_r(X_r)\right\rangle\cong \pi_{i_j}(C)$. Therefore, we can apply \cite[Theorem 4.3]{CGS} to find the $k+1$ sequences of natural numbers $0\le x_1,\dots,x_k\le t-1$ and $0=b_0^j<b_1^j<\dots<b_{x_j}^j<b_{x_j+1}^j=t$ and the $k$ strictly decreasing chain
of Ferrers diagrams $\mathcal F_0^j\supset
\mathcal F_1^j\supset \dots\supset \mathcal F_{x_j}^j$  (where $1\le j\le k$) satisfying items \#1,\#2 and \#3 of the theorem.

Now, for all $1\le j\le k\ ,\  0\le m\le x_j$ and $g\in \chi^j_m$, since $a^{b^j_m}g\in a^{b_{m}^j}\chi_{m}^j\subseteq \chi^j\subseteq \pi_{i_j}(C_{i_j})$, there exist elements $w_z^{j,m,g}\in\mathcal{A}$ (for all $z\in\{i_j+1,\dots,l\}$) such that
$$\overrightarrow{g_w}=\left(0\ ,\ {\dots}\ ,\ 0\ ,\ \stackrel{{(i_j}}{a^{b^j_m}g}\ ,\ w_{i_j+1}^{j,m,g}\ ,\ \dots\ ,\ w_{l}^{j,m,g}\right)\in C.$$
We claim that the matrix $W$ consisting on all the elements of this form is a generator matrix of $C$. Namely, if $c$ is a nonzero codeword in $C$, then there exists $j\in\{1,\dots,k\}$ such that $c\in C_{i_j}\setminus C_{i_{j+1}}$ (where $C_{i_{k+1}}:=\{0\}$), i.e., the codeword has the form $(0,\stackrel{(i_j-1}{\dots},0,c_{i_j},\dots,c_l)$. Because $c_{i_j}\in \pi_{i_j}(C_{i_j})$ we can apply the $L-$reduction of $c_{i_j}$ mod $\chi^j$ to the element $c$ (simply substitute the elements $a^{b^j_m}g$ in $\chi^j$ by the corresponding row $\overrightarrow{g_w}$). The remaining word is in $C$, but since its $i_j-$th component is zero, it must be 
an element in $C_{s}$ where $s\in\{i_{j+1},\dots,i_{k+1}\}$. If $s=i_{k+1}$, then such a codeword is zero and we have shown that $c$ is generated by the rows of $W$. Otherwise, we can apply the same argument inductively to reduce the remaining codeword mod $\chi^{j+1},\dots,\hbox{ mod }\chi^k$, until the zero codeword is obtained.

Finally, the generator form stated in the theorem is obtained when we apply the $L-$reduction of $w_s^{j,m,g}$ mod $\chi^s$ to the corresponding rows of the matrix $W$ (where $s\in\{i_{j+1},\dots, i_k\},1\le j<k,0\le m\le x_j$ and $g\in \chi^j_m$).

\end{proof}

\begin{example}
{The (punctured) Generalized Kerdock Code \cite{KuzNec94} has a multivariable presentation over the Galois ring $R=GR(q^2,2^2)$ ($q=2^d$) of order $q^2$ and characteristic $2^2$, which can be seen as an $\mathcal{A}-$code in our setting (with $\mathcal{A}=R[X_1,\dots,X_{r}]/\left<X_1^{q^m-1}-1,X_2^2-1,\dots,X_{r}^2-1\right>,m\ge 3$ odd, $r=d+1$, and $l=1$) \cite{nuestro2}. We will show that the (extended) Generalized Kerdock Code admits also an $\mathcal{A}-$code presentation with $l>1$, and we will obtain its generator matrix.}
    
{Any element $b\in R$ can be decomposed
     as $b=\gamma_0(b)+2\gamma_1(b)$, where $\gamma_i(b)\in \Gamma(R)=\{b^q=b\ |\ b\in R\}$ (the \emph{Teichm\"uller Coordinate Set (TCS)} of $R$). The multiplicative group
$1+2R$ is a direct product
$\left\langle\eta_1\right\rangle\times\dots\times\left\langle\eta_d\right\rangle$
of $d$ subgroups of order $2$.
If $S=GR(q^{2m},2^2)$ is the Galois extension of odd degree $m$ over $R$, then the multiplicative group on nonzero elements in its TCS  $\Gamma(S)=\{b^{q^m}=b\ |\ b\in S\}$ is cyclic. Let
     $\theta\in \Gamma(S)$ be one of its generators (of order
     $\tau=q^m-1$). Consider the trace map $\emph{Tr}$ from $S$ onto $R$.}

{The (punctured) Generalized Kerdock Code is equivalent to the projection 
 of the $\mathcal
\mathcal{A}-code$ of index $l=1$ 
$$C=\left\{\left(\sum_{i_1=0}^{\tau-1}\sum_{i_2=0}^1\dots\sum_{i_r=0}^1\left((\emph{Tr}(\xi\theta^{i_1})+b)
\eta_1^{i_2}\dots\eta_l^{i_r}\right) X_1^{i_1}X_2^{i_2}\dots
X_r^{i_r}\right)\ |\ \xi\in S, b\in R\right\}$$
with the map 
$\gamma_1^{\tau q}=\stackrel{\tau q}{\overbrace{\gamma_1\times\dots\times
\gamma_1}}$,
i.e., $\gamma_1^{\tau q} (\Phi(C))\subseteq \Gamma(R)^{\tau q}$.
}

{For any codeword 
$$(f_{\xi,b}):=\left(\sum_{i_1=0}^{\tau-1}\sum_{i_2=0}^1\dots\sum_{i_r=0}^1\left((\emph{Tr}(\xi\theta^{i_1})+b)
\eta_1^{i_2}\dots\eta_d^{i_r}\right) X_1^{i_1}X_2^{i_2}\dots
X_r^{i_r}\right)$$ with $\xi\in S, b\in R$, 
observe that the parity-check sum of $$(\emph{Tr}(\xi)+b,\emph{Tr}(\xi\theta)+b,\dots, \emph{Tr}(\xi\theta^{\tau-1})+b,b)$$ is zero, and define
$$g_{\xi,b}:=\sum_{i_1=0}^{\tau-1}\sum_{i_2=0}^1\dots\sum_{i_r=0}^1\left(b
\eta_1^{i_2}\dots\eta_d^{i_r}\right) X_1^{i_1}X_2^{i_2}\dots
X_r^{i_r}.$$
Consider the
$\mathcal{A}-code$ $D=\left\{(f_{\xi,b},g_{\xi,b})\ |\ \xi\in S, b\in R\right\}$ of index $l=2$.
Its image $\gamma_1^{2\tau q}(\Phi(D))$, when punctured in the first $q^{m+1}$ components, is equivalent to the (extended) Generalized Kerdock Code. }

{Let us obtain the generator matrix of Theorem \ref{matriz} for $D$. From \cite{nuestro2} we know that the $\mathcal{A}-$code $C$ is 1-generated by
the codeword $$(H_0
P(X_1)Q(X_2,\dots,X_r))$$ where $H_0$ is the trailing coefficient of the unique irreducible divisor $H(X_1)\in R[X_1]$ of
$X_1^\tau-1$ such that $H(\theta)=0$,
$P(X_1)=\frac{X_1^{\tau}-1}{(X_1-1)H^*(X_1)}$ ($H^*(X_1)$ is
the reciprocal polynomial of $H(X_1)$) and
$$Q(X_2,\dots,X_r)=\sum_{i_2=0}^1\dots\sum_{i_r=0}^1(
\eta_1^{i_2}\dots\eta_d^{i_r}) X_2^{i_2}\dots
X_r^{i_r}.$$
Therefore, $D$ is also $1-$generated by construction, and its generator matrix is given by
$$\left(H_0
P(X_1)Q(X_2,\dots,X_r)\ ,\ \sum_{i_1=0}^{\tau-1}\frac{H_0}{H^*(1)}X_1^{i_1}Q(X_2,\dots,X_r)\right)$$
because the sum of {the} coefficients of the polynomial $H_0
P(X_1)$ is $\frac{-H_0}{H^*(1)}$.}

\end{example}

\section{Canonical generator matrix of the dual code}\label{sec:dual}

In this section we present some results about duality of codes. If $R$ is a finite commutative quasi-Frobenius ring,  duality for $R-$linear codes can be defined through the inner product of $R^{nl}$: $(r_1,\dots,r_{nl})\cdot (s_1,\dots,s_{nl})=\sum_{j=1}^{nl}r_js_j$. This notion can be translated into $\mathcal{A}-$codes using the map $\Phi$.

\begin{definition}
The {{$R$}}-\emph{dual code} of the $\mathcal{A}-$code $C$ is $$C^{\perp_R}=\{e\in E\ |\ \Phi(e)\cdot \Phi(c)=0\hbox{ , for all }c\in C\}.$$  
\end{definition}

In the special case of $\mathcal A=R[X_1,\dots,X_r]/\left\langle X_1^{n_1}-1,\dots,X_r^{n_r}-1\right\rangle$, it is customary to define an Hermitian inner product on $E$:  $\left\langle (f_1,\dots,f_l),(g_1,\dots,g_l)\right\rangle=\sum_{j=1}^l f_j\overline{g_j}$, where $\overline{\cdot}$ is a conjugation map on $\mathcal{A}$ defined by 

$$\begin{array}{ccc} \overline{\cdot}:\mathcal A &\longrightarrow& \mathcal A\\
\sum f_\ii \XX^\ii&\longrightarrow&\sum f_\ii \XX^{\mathbf{n}-\ii} \end{array},$$

\noindent with $\mathbf{n}=(n_1,\dots,n_r)$. The (usual) Euclidean inner product $\cdot$ on $R^{nl}$ and the Hermitian product $\langle\cdot,\cdot\rangle$ on $E$ can be related using the following result, whose proof can be found in Proposition 3.2 of \cite{ling}.

\begin{proposition}
Let $C$ be a code over $E=\mathcal A^l$. Let us denote by $\perp_E$ the dual taken with respect the Euclidean inner product $\cdot$ of $R^{nl}$ and $\perp_H$ the dual in $E$ taken with respect the Hermitian inner product $\langle\cdot,\cdot\rangle$. If $\Phi:E\rightarrow R^{nl}$ is the map of Definition \ref{def:Phi}, then $\Phi(C)^{\perp_E}=\Phi(C^{\perp_H}).$
\end{proposition}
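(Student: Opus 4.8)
The plan is to establish the identity $\Phi(C)^{\perp_E}=\Phi(C^{\perp_H})$ by a direct computation relating the Euclidean inner product on $R^{nl}$ with the Hermitian inner product on $E=\mathcal A^l$, exploiting the special multiplicative structure of $\mathcal A=R[X_1,\dots,X_r]/\langle X_1^{n_1}-1,\dots,X_r^{n_r}-1\rangle$. The crucial elementary fact is the following. Given $f=\sum_{\ii} f_\ii\XX^\ii$ and $g=\sum_{\jj} g_\jj\XX^\jj$ in $\mathcal A$, the coefficient of $\XX^{\mathbf 0}$ in the product $f\,\overline g$ equals $\sum_{\ii} f_\ii g_\ii=\phi(f)\cdot\phi(g)$, because the monomial $\XX^\ii\cdot\XX^{\mathbf n-\jj}$ reduces to $\XX^{\mathbf 0}$ in $\mathcal A$ precisely when $\ii\equiv\jj\pmod{\mathbf n}$, i.e. (under the normalized representatives $0\le i_s<n_s$) exactly when $\ii=\jj$. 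More generally, the coefficient of $\XX^\kk$ in $f\,\overline g$ is $\sum_{\ii}f_{\ii}g_{\ii-\kk}$, where indices are read modulo $\mathbf n$; this is the Euclidean inner product of $\phi(f)$ with a cyclic shift $\XX^\kk\cdot g$ of $\phi(g)$.

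The argument then proceeds as follows. First I would fix the componentwise statement: for $(f_1,\dots,f_l),(g_1,\dots,g_l)\in E$, write $\langle(f_j),(g_j)\rangle=\sum_{j=1}^l f_j\overline{g_j}\in\mathcal A$ and observe, using the identity above applied coordinatewise and summed over $j$, that the constant coefficient $\mathrm{Cf}\bigl(\langle(f_j),(g_j)\rangle,\XX^{\mathbf 0}\bigr)$ equals $\Phi(f_1,\dots,f_l)\cdot\Phi(g_1,\dots,g_l)$. Next, the key observation for the ``only if'' direction: since $C$ is an $\mathcal A$-code, for any $c\in C$ and any monomial $\XX^\kk$ we also have $\XX^\kk c\in C$; hence if $e\in C^{\perp_H}$ then $\langle e,c\rangle=0\in\mathcal A$, so \emph{all} coefficients of $\langle e,c\rangle$ vanish, which by the generalized identity says that $\Phi(e)$ is Euclidean-orthogonal to $\Phi(\XX^\kk c)$ for every $c$ and every $\kk$; in particular $\Phi(e)\cdot\Phi(c)=0$, giving $\Phi(e)\in\Phi(C)^{\perp_E}$, i.e. $\Phi(C^{\perp_H})\subseteq\Phi(C)^{\perp_E}$.

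For the reverse inclusion I would argue by a counting/cardinality comparison rather than chasing elements, which is cleaner over a quasi-Frobenius coefficient ring. Since $R$ is finite commutative Frobenius (chain rings are Frobenius), $R^{nl}$ is a Frobenius module, so $|\Phi(C)|\cdot|\Phi(C)^{\perp_E}|=|R|^{nl}$; likewise, because the Hermitian form $\langle\cdot,\cdot\rangle$ on $E$ is nondegenerate (the conjugation $\overline{\,\cdot\,}$ is an $R$-algebra automorphism of $\mathcal A$, as $\XX^\ii\mapsto\XX^{\mathbf n-\ii}$ is induced by $X_s\mapsto X_s^{n_s-1}=X_s^{-1}$, a unit), one has $|C|\cdot|C^{\perp_H}|=|E|=|R|^{nl}$. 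Since $\Phi$ is an $R$-module isomorphism onto its image with $|\Phi(C^{\perp_H})|=|C^{\perp_H}|=|R|^{nl}/|C|=|R|^{nl}/|\Phi(C)|=|\Phi(C)^{\perp_E}|$, the inclusion proved above must be an equality. I expect the main obstacle to be purely bookkeeping: making the index arithmetic modulo $\mathbf n$ rigorous when passing between normalized polynomial representatives and cyclic shifts, and carefully justifying the cardinality formula $|C|\,|C^{\perp_H}|=|E|$ for the Hermitian pairing — this follows from the Euclidean case via the automorphism $\overline{\,\cdot\,}$, since $C^{\perp_H}=\overline{C^{\perp_E}}$ (where $\overline{\,\cdot\,}$ acts componentwise on $E$) and conjugation is a bijection, but it should be spelled out. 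Everything else is a routine consequence of the displayed coefficient identity.
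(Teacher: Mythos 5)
The paper gives no proof of this proposition --- it simply cites Proposition~3.2 of Ling and Sol\'e \cite{ling}, so there is no in-text argument to compare with, and your self-contained proof fills a gap the paper leaves implicit. Your key coefficient identity is correct: writing $f=\sum_\ii f_\ii\XX^\ii$ and $g=\sum_{\mathbf{j}} g_{\mathbf{j}}\XX^{\mathbf{j}}$, the monomial $\XX^\ii\XX^{\mathbf{n}-\mathbf{j}}$ reduces to $\XX^{\mathbf{k}}$ exactly when $\ii-\mathbf{j}\equiv\mathbf{k}\pmod{\mathbf{n}}$, so the $\XX^{\mathbf{k}}$-coefficient of $f\overline g$ is $\sum_\ii f_\ii g_{\ii-\mathbf{k}}=\phi(f)\cdot\phi(\XX^{\mathbf{k}} g)$, and summing over coordinates shows the $\XX^{\mathbf{k}}$-coefficient of $\langle e,c\rangle$ equals $\Phi(e)\cdot\Phi(\XX^{\mathbf{k}} c)$. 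Two remarks on the organisation. First, for the inclusion $\Phi(C^{\perp_H})\subseteq\Phi(C)^{\perp_E}$ you only need $\mathbf{k}=\mathbf{0}$; the closure $\XX^{\mathbf{k}}c\in C$ is really the engine of the \emph{reverse} inclusion and gives it directly --- if $\Phi(e)\cdot\Phi(c)=0$ for all $c\in C$, then since every $\XX^{\mathbf{k}}c$ also lies in $C$, all coefficients of $\langle e,c\rangle$ vanish, so $e\in C^{\perp_H}$ --- which makes the cardinality step superfluous. (Your counting route is also valid, but it needs not only that $R$ is Frobenius but that $\mathcal A$ itself is, to justify $|C||C^{\perp_{\mathcal A}}|=|E|$; this holds here because $\mathcal A$ is the group algebra of a finite abelian group over the Frobenius ring $R$, but it should be stated.) Second, a notational slip: in the proposition's conventions $\perp_E$ denotes a dual inside $R^{nl}$, so $C^{\perp_E}$ for $C\subseteq E$ is undefined; what you mean when writing $C^{\perp_H}=\overline{C^{\perp_E}}$ is the $\mathcal A$-dual $C^{\perp_{\mathcal A}}=\{e\in E\mid \sum_i e_ic_i=0\ \forall c\in C\}$ introduced later in that section, for which $C^{\perp_H}=\overline{C^{\perp_{\mathcal A}}}$ does hold since $\overline{\langle e,c\rangle}=\sum_i\overline{e_i}c_i$. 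With these adjustments the argument is complete and correct.
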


However, in some cases, we have to consider $\mathcal A$-duality instead of $R$-duality. For example, this happens with codes over the Frobenius local ring $\mathbb Z_4[X]/\langle X^2\rangle$, since it is not possible to construct a duality preserving map from the ring $\mathcal A$ to $R^n$, as can be seen on \cite{MoroSzaboYildiz}. On the other hand, if there exists a map $\theta$ from the ring $\mathcal A$ to $R^n$ that preserves duality, then $\mathcal A$-duality can be analized as $R$-duality. This is the case of the following example which provides a code over an affine algebra whose Gray image has good properties. 

\begin{example}\label{Ex:3.9_MMSY}
Let $R=\mathbb{Z}_4$ and $t_1(X_1)=X_1^2+2X_1$ and let
$$A=\left[
\begin{array}{cccc}
2X_1&3+3X_1&3+3X_1&3+3X_1\\
3+3X_1&2X_1&3+3X_1&3+3X_1\\
3+3X_1&3+X_1&2X_1&1+3X_1\\
3+3X_1&1+3X_1&3+X_1&2X_1
\end{array}\right].$$

\noindent Then, $[I_4|A]$ is the generator matrix of a self-dual code $C$ over $R$ of length 8 such that its Gray image is a Type II extremal self-dual code over $\mathbb{Z}_4$ of length 16 (see  \cite{Martinez-Moro2015} and \cite{MoroSzaboYildiz} for details).  
\end{example}

If $\mathcal A$-duality is considered, the dual code is defined as $$C^{\perp_{\cal A}}=\left\{e\in E\ | e\cdot c=\sum_{i=1}^l e_ic_i=0 \hbox{ , for all }c\in C\right\}.$$
\noindent As the code $C$ can be seen as a subgroup of $(E,+)$, it can be proved that there exists a group isomorphism $C^{\perp_{\mathcal A}}\cong E/C$ and so, $|C^{\perp_{\mathcal A}}||C|=|E|$ (see \cite{Nechaev99}).

A matrix $H$ whose rows generate the dual code $C^{\perp_{\mathcal A}}$ as $\mathcal A$-module is known as \emph{parity-check matrix} of the code $C$. If such a matrix exists, then $C=\{c\in E\,|\, Hc^t=0  \}$, that is, $C=\ker H$. All linear codes over a quasi-Frobenius ring have a parity-check matrix $H$.

We will use the canonical generator systems (CGS) in order to find a parity-check matrix of a given code $C$. We will only consider codes over univariate polynomial rings $\mathcal A=R[X]/\langle t_1(X)\rangle$, with $t_1(X)$ monic and such that $\mathcal A$ is Frobenius. In the following, we will denote $\mathcal A$-duality as $\perp$.

Let $C$ be an $\mathcal A$-code of index $l$. As in Theorem \ref{matriz} let us consider the decreasing chain of $\mathcal{A}-$(sub)codes of $C$ given by $C=C_1\supseteq C_2\supseteq \dots \supseteq C_{l+1}=\{0\}$, with $C_i=\cap_{j=1}^{i-1}\pi_j^{-1}(0)$ for $i=1,\dots,l+1$. Let us denote by $C'$ the punctured code of $C_2$ on the first position. 

\begin{lemma}\label{Lemma:concat}
Let $c'$ be an element of $C'^{\perp}$. Then, there exists an element $c\in\mathcal A$ such that $(c,c')$ is an element of $C^{\perp}$.
\end{lemma}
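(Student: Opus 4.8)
The plan is to deduce the statement from the self-injectivity of the coefficient algebra $\mathcal A$, which is available because $\mathcal A$ is assumed to be Frobenius (hence quasi-Frobenius, hence injective as a module over itself). First I would record that, since every codeword of $C_2$ has first coordinate $0$, puncturing on the first position is an $\mathcal A$-module isomorphism $C_2\xrightarrow{\ \sim\ }C'$; under this identification the hypothesis $c'\in C'^{\perp}$ says precisely that $\sum_{i=2}^{l}c'_i d_i=0$ for every $(0,d_2,\dots,d_l)\in C_2$. Next I would introduce the $\mathcal A$-linear functional $\psi\colon C\to\mathcal A$ given by $\psi(d_1,\dots,d_l)=\sum_{i=2}^{l}c'_i d_i$. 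The key observation is that, by the previous sentence, $\psi$ vanishes on $C_2=\ker\bigl(\pi_1|_{C}\bigr)$. Since $\pi_1|_{C}\colon C\to \pi_1(C)=:I$ is an $\mathcal A$-module surjection onto an ideal $I$ of $\mathcal A$ with kernel $C_2$, the functional $\psi$ descends to an $\mathcal A$-linear map $\bar\psi\colon I\to\mathcal A$ with $\psi=\bar\psi\circ\pi_1$.

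Then I would split off $\bar\psi$ by injectivity. Applying the injectivity of the $\mathcal A$-module $\mathcal A$ to the inclusion $I\hookrightarrow\mathcal A$ and to the map $-\bar\psi\colon I\to\mathcal A$ produces an $\mathcal A$-linear extension $g\colon\mathcal A\to\mathcal A$; because $\mathcal A$ is commutative and unital, $g$ is multiplication by $c:=g(1)\in\mathcal A$, so that $c\,a=-\bar\psi(a)$ for all $a\in I$. To conclude, I would verify that $(c,c')=(c,c'_2,\dots,c'_l)\in C^{\perp}$: for any $d=(d_1,\dots,d_l)\in C$,
$$(c,c')\cdot d=c\,\pi_1(d)+\psi(d)=-\bar\psi\bigl(\pi_1(d)\bigr)+\bar\psi\bigl(\pi_1(d)\bigr)=0.$$
This exhibits the required element $c\in\mathcal A$.

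The one genuinely nontrivial step is the extension of $\bar\psi$: it amounts to the fact that every $\mathcal A$-module homomorphism from the ideal $I=\pi_1(C)$ into $\mathcal A$ is the restriction of multiplication by a ring element, which for a general ring fails and is exactly what the Frobenius (self-injective) hypothesis on $\mathcal A$ provides; equivalently, one may invoke the double-annihilator property of quasi-Frobenius rings. The remaining points are routine: the identification $C_2\cong C'$, the $\mathcal A$-linearity of $\psi$, and—most importantly—the verification that $\psi$ kills $C_2$, which is the only place where the hypothesis $c'\in C'^{\perp}$ is used. Alternatively, one could give a hands-on argument by writing a \emph{CGS} of $I=\pi_1(C)$ as in Theorem \ref{matriz} and solving the corresponding system of polynomial congruences for $c$, but the self-injectivity argument is shorter and more transparent.
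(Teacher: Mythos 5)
Your proof is correct, but it takes a genuinely different route from the paper. The paper works with the restriction map $\varphi\colon C^{\perp}\to C'^{\perp}$, $\varphi(c)=(c_2,\dots,c_l)$, and proves surjectivity by a cardinality count: it computes $|C'^{\perp}|$ and $|\ker\varphi|$ using the explicit enumeration formulas from Theorem~\ref{matriz}, identifies $\ker\varphi$ with the annihilator of $\pi_1(C)$ in $\mathcal A$, and invokes the Frobenius hypothesis via the relation $|I||\mathrm{Ann}(I)|=|\mathcal A|$ to conclude that $|\ker\varphi|=|C^{\perp}|/|C'^{\perp}|$, whence $\varphi$ is onto. You instead use the Frobenius hypothesis directly as self-injectivity of $\mathcal A$: you observe that the linear functional $\psi(d)=\sum_{i\ge 2}c'_i d_i$ kills $C_2=\ker(\pi_1|_C)$, descends to the ideal $I=\pi_1(C)$, and then extends to $\mathcal A$ by Baer's criterion, the extension being multiplication by the desired element $c$. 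Your argument is more conceptual and self-contained: it avoids any cardinality computation, does not depend on the CGS structure theorem at all, and makes the role of the Frobenius hypothesis transparent (it is exactly the extension property of $\mathrm{Hom}_{\mathcal A}(-,\mathcal A)$ that is being used). The paper's argument has the advantage of reusing the machinery already developed in Theorem~\ref{matriz}, fitting the overall flow of the section, but it is less direct: in particular, the paper's proof only establishes existence nonconstructively through surjectivity, while your proof exhibits $c$ as the value at $1$ of an extended homomorphism and makes visible, via the descent to $I$, exactly what polynomial congruences $c$ must satisfy. Both arguments ultimately rest on the same algebraic fact ($\mathcal A$ finite Frobenius $\Leftrightarrow$ self-injective $\Leftrightarrow$ $|I||\mathrm{Ann}(I)|=|\mathcal A|$ for all ideals $I$), packaged differently.
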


\begin{proof}
Let us suppose that there exists at least one element of $C$ with $c_1\neq 0$. Otherwise, any element $c'\in C'^{\perp}$ can be extended to an element $(c_1,c')\in C^{\perp}$ by concatenation of any $c_1\in\mathcal A$, and the result follows trivially. 

Following Theorem \ref{matriz}, there exist $k$ Canonical Generating Systems (CGS) $\chi^{j}=\chi_0^j\cup a^{b_1^j}\chi_1^j\cup\ldots\cup a^{b_{x_j}^j}\chi_{x_j}^j$, with $0=b_0^j<b_1^j<\cdots<b_{x_j}^j<b_{x_{j+1}}^j=t$, such that, for $j=1,\ldots,k$, the set $\chi^j$ generates the projection $\pi_{i_j}(C_{i_j})$ as $\mathcal A$-submodule.

Let us define the map $\varphi:C^{\perp}\rightarrow C'^{\perp}$ such that for any $c=(c_1,c_2,\ldots,c_l)\in E$, $\varphi(c)=(c_2,\ldots,c_l)$. Since $c\in C^{\perp}$, clearly $c\cdot e=0$ for all $e\in C_2$, so $0=\sum_{i=2}^l c_ie_i$, i.e., $(c_2,\ldots,c_l)\in C'^{\perp}$ and $\varphi$ is well defined. Moreover, $\varphi$ is an $\mathcal A$-linear map and $\ker \varphi=\{(c_1,0,\ldots,0)\in E\,|\, c_1\cdot G_i=0 \mbox{ for all } G_i\in\chi^1 \}$. The ring $\mathcal A$ is Frobenius so, from Theorem \ref{matriz}, it follows that $|C'^{\perp}|=q^{(l-1)nt-\sum_{j=2}^k\sum_{m=0}^{x_k}(t-b_i^j)(|\mathcal F^j_{m-1}|-|\mathcal F^j_m|)}$, where $n$ is the degree of the polynomial $t_1(X_1)$ and $\mathcal F_m^j$ is the Ferrers diagram associated to $\chi^j_m$. On the other hand, also by Theorem \ref{matriz}, we have that $|\ker \varphi|= q^{nt-\sum_{m=0}^{x_1}(t-b_m^1)(|\mathcal F_{m-1}^1|-|\mathcal F_{m}^1|)}=|C^{\perp}|/|C'^{\perp}|$. Then, $\varphi$ is a surjective map and the result follows.

\end{proof}

\begin{theorem}
Let $C$ be an $\mathcal A$-code of index $l$ and let $H'$ be a generator matrix of the code $C'^{\perp}$. Then, there exist polinomials $G'_0,\ldots,G'_s$ and $h_{s+1},\ldots,h_k$ in $\mathcal A$ such that the matrix
$$\left[\begin{array}{c|ccc} G'_0&0&\cdots&0\\ \vdots&\vdots&\ddots&\vdots\\ G'_s&0&\ldots&0\\\hline h_{s+1}&&&\\\vdots&&H'&\\ h_k&&& \end{array}\right]$$
\noindent is a parity-check matrix of the code $C$.
\end{theorem}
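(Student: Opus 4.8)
The plan is to build the parity-check matrix of $C$ by lifting a parity-check matrix of $C'$ one coordinate at a time, using Lemma~\ref{Lemma:concat} as the engine. First I would recall the setup of Lemma~\ref{Lemma:concat}: $C'$ is the puncturing on the first coordinate of $C_2=\cap_{j=1}^{1}\pi_j^{-1}(0)$, and $H'$ is a generator matrix of $C'^{\perp}$ as $\mathcal A$-module. The lemma already tells us that every row $c'$ of $H'$ extends to some $(h,c')\in C^{\perp}$; applying this to each of the $k-s$ rows $h_{s+1},\dots,h_k$ of $H'$ (indexing so that $H'$ has $k-s$ rows) produces the bottom block $[\,h_i\mid H'\,]$ of the displayed matrix, and all of these rows lie in $C^{\perp}$ by construction. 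It then remains to produce the top block: rows of the form $(G'_j,0,\dots,0)$ that lie in $C^{\perp}$ and, together with the bottom block, generate all of $C^{\perp}$.

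Next I would identify the top block with the kernel of the map $\varphi:C^{\perp}\to C'^{\perp}$, $\varphi(c_1,\dots,c_l)=(c_2,\dots,c_l)$, introduced in the proof of Lemma~\ref{Lemma:concat}. We have $\ker\varphi=\{(c_1,0,\dots,0)\mid c_1\cdot G=0\text{ for all }G\in\chi^1\}$, i.e. $c_1$ annihilates the ideal $\pi_1(C_1)=\pi_{i_1}(C_{i_1})$ generated by the CGS $\chi^1$. Since $\mathcal A=R[X]/\langle t_1(X)\rangle$ is a finite Frobenius (hence quasi-Frobenius) ring, the annihilator ideal $\mathrm{Ann}_{\mathcal A}(\pi_1(C_1))$ is itself an ideal of $\mathcal A$, and one can take a generating set $\{G'_0,\dots,G'_s\}$ for it — for instance from its own CGS, or directly from the canonical generator matrix of Theorem~\ref{matriz} applied to the $\mathcal A$-code $\mathrm{Ann}(\pi_1(C_1))$. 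Embedding each $G'_j$ as $(G'_j,0,\dots,0)$ gives $s+1$ rows of $\ker\varphi\subseteq C^{\perp}$, forming the top block.

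To finish, I would check that the two blocks together generate $C^{\perp}$ as an $\mathcal A$-module. Let $c\in C^{\perp}$. Then $\varphi(c)\in C'^{\perp}$, so $\varphi(c)$ is an $\mathcal A$-combination of the rows $h_{s+1},\dots,h_k$ of $H'$; subtracting the same combination of the bottom block of the displayed matrix from $c$ yields an element of $C^{\perp}$ whose image under $\varphi$ is zero, i.e. an element of $\ker\varphi$, which by the previous paragraph is an $\mathcal A$-combination of the $(G'_j,0,\dots,0)$. Hence $c$ lies in the $\mathcal A$-span of all displayed rows, so they generate $C^{\perp}$; combined with the counting identity $|C^{\perp}||C|=|E|$ from the Nechaev isomorphism $C^{\perp_{\mathcal A}}\cong E/C$, one confirms no redundancy is lost. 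A small point needing care is the case where every element of $C$ has $c_1=0$ (so $C=C_2$ up to the trivial first slot): then $\varphi$ is still defined, $\ker\varphi=0\oplus C'^{\perp}$-complement is handled exactly as in the trivial branch of Lemma~\ref{Lemma:concat}, and the top block is empty.

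The main obstacle I anticipate is the surjectivity/exactness bookkeeping: one must be sure that $\varphi$ is surjective (which is precisely the content of Lemma~\ref{Lemma:concat}, so this is already in hand) \emph{and} that $\ker\varphi$ is correctly described as an annihilator ideal that is finitely generated over $\mathcal A$ — this uses in an essential way that $\mathcal A$ is Frobenius (so that $\mathrm{Ann}(\mathrm{Ann}(I))=I$ and dimensions add up) and that we restrict to the univariate case, where $\mathcal A$ is a principal-ideal-like quotient amenable to Theorem~\ref{matriz}. The rest — writing down the $G'_j$ explicitly from a CGS and verifying the block decomposition — is routine once exactness of $0\to\ker\varphi\to C^{\perp}\xrightarrow{\varphi} C'^{\perp}\to 0$ is established.
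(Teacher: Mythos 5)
Your proposal is correct and follows essentially the same route as the paper: both use Lemma~\ref{Lemma:concat} to lift the rows of $H'$ to the bottom block, identify the top block with $\ker\varphi$ as the annihilator ideal $A=\operatorname{Ann}_{\mathcal A}(\pi_1(C_1))=\cap_i\operatorname{Ann}(G_i)$ generated by a CGS $\{G'_0,\dots,G'_s\}$ via \cite[Theorem 4.3]{CGS}, and then show generation by subtracting from $c\in C^{\perp}$ the $\mathcal A$-combination matching $\varphi(c)$. The only cosmetic difference is that you argue the reduction abstractly through exactness of $0\to\ker\varphi\to C^{\perp}\to C'^{\perp}\to 0$, whereas the paper spells out the computation $(c_1-\sum\beta_i h_i)G_j=0$ for $j=1,\dots,x_1$ explicitly.
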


\begin{proof}
From Theorem \ref{matriz}, we know that there exists a Canonical Generator System $\chi^1$ which generates the projection $\pi_1(C_1)$ as $\mathcal A$-submodule. Since $\mathcal A$ is a univariate polynomial ring, then $\chi^1=\{G_0,G_1,\ldots,G_{x_1}\}$ where $G_i$ are polynomials such that $||G_i||= b_i^1$. Now, for each $1\leq i\leq x_1$, let us denote $A_i=\{g\in\mathcal A\,|\, g\cdot G_i=0\}$, the annihilator of $G_i$ in $\mathcal A$. The intersection $A=\cap_{i=1}^{x_1} A_i$ is an ideal of $\mathcal A$ so, by \cite[Theorem 4.3]{CGS}, there exists $\chi'=\{G'_0,G'_1,\ldots,G'_s\}$, a Canonical Generator System of the ideal $A$. Since each row $(h'_{i2},\ldots,h'_{il})$, $s+1\leq i\leq k$, of $H'$ is an element of $C'^{\perp}$, by Lemma  \ref{Lemma:concat} it is possible to find a polinomial $h_i\in\mathcal A$ such that $(h_i,h'_{i2},\ldots,h'_{il})\in C^{\perp}$.

It is easy to see that {any row} of the matrix $H$ {is} in $C^{\perp}$. First of all, notice that the polynomials $G'_i$, $1\leq i\leq s$, generate $A$, so the row $(G'_i,0,\ldots,0)$ is an element of $C^{\perp}$. The other rows of $H$ are in $C^{\perp}$ by construction. Conversely, let $c=(c_1,c_2,\ldots,c_l)\in C^{\perp}$, then $c'=(c_2,\ldots,c_l)\in C'^{\perp}$ will be generated by the rows $h'^{(i)}$ of $H'$. That is, there exist elements $\beta_{s+1},\ldots,\beta_k\in\mathcal A$ such that $c'=\sum_{i=s+1}^k\beta_i h'^{(i)}$. If we denote by $(G_j,G_{j2},\cdots,G_{jl})$ the $j$-th row of the generator matrix $G$ of $C$, it is clear that $h_iG_j+\sum_{w=2}^l h'_{iw}G_{jw}=0$ for each $1\leq j\leq x_1$. So, for each $j$
$$\sum_{i=s+1}^k\beta_ih_i G_j =\sum_{i=s+1}^k\beta_i\left(-\sum_{w=2}^lh'_{iw}G_{jw}\right)=-\sum_{w=2}^lc_wG_{jw}=c_1G_j.$$
\noindent Then, we have that $(c_1-\sum_{i=s+1}^k\beta_ih_i)G_j=0$ for $j=1,\ldots,x_1$. That is, $c_1-\sum_{i=s+1}^k\beta_ih_i\in A$ and so, it can be written as an $\mathcal A$-linear combination of polynomials $G'_0,\ldots,G'_s$. The proof is complete. 
\end{proof}

We finish this section with two examples of parity-check matrices of codes over the rings $\mathbb{Z}_4[X]/\langle X^2+2X\rangle$ and $\mathbb{Z}_4[X]/\langle X^2\rangle$, respectively. Both rings are local and Frobenius but none of them is a chain ring. These rings are studied in \cite{Martinez-Moro2015}.
\begin{example}\label{Ex:Cod_1}
Let us consider the code $C$ over $\mathbb{Z}_4[X]/\langle X^2+2X\rangle$ of index 2 whose generator matrix is given by 
$$\left[\begin{array}{cc}X&0\\2&X \\0&2\end{array}\right].$$

\noindent The parity-check matrix of $C$ is calculated as follows. First of all, notice that $H'=\left[2\right]$. On the other hand, it is easy to see that $A_1=\mbox{Ann}(\langle X\rangle)=\langle X+2\rangle$ and $A_2=\mbox{Ann}(\langle 2 \rangle)=\langle 2\rangle$, so $A=\langle 2X\rangle$. The polynomial $h_2$ should satisfy the  conditions
$$\begin{array}{lcc}
h_2X+0&=&0,\\
h_22+2X&=&0
\end{array}$$
\noindent which imply $h_2=X+2$. Then, the parity-check matrix of $C$ will be
$$\left[\begin{array}{cc} 2X&0\\X+2&2\end{array}\right]\backsim \left[\begin{array}{cc} X+2&2\end{array}\right]. $$
\noindent So the dual code $C^{\perp}$ is generated by the single word $(X+2,2)$.

\begin{example}\label{Ex:Cod_2}
Let $C$ be the code of index 2 over $\mathbb{Z}_4[X]/\langle X^2\rangle$ whose generator matrix is the same one that in Example \ref{Ex:Cod_1}. In order to find the parity-check matrix of $C$, we notice that $H'=[2]$ and that $A_1=\mbox{Ann}(\langle X\rangle)=\langle X\rangle$, $A_2=\mbox{Ann}(\langle 2\rangle)=\langle 2\rangle$. Therefore $A=\langle 2X\rangle$. Finally, the polynomial $h_2$ satisfies the conditions 
$$\begin{array}{lcc}
h_2X+0&=&0,\\
h_22+2X&=&0
\end{array}$$ 
\noindent which {yield} to $h_2=X$. So, the parity-check matrix of $C$ is
$$\left[\begin{array}{cc} 2X&0\\X&2\end{array}\right]\backsim\left[\begin{array}{cc} X&2\end{array}\right].$$
\noindent Notice that the dual code $C^{\perp}$ is also generated by a single word: $(X,2)$. As it is referred in \cite{Martinez-Moro2015} there is no duality preserving map from the ring $\mathcal A=\mathbb{Z}_4[X]/\langle X^2\rangle$ to $\mathbb{Z}_4^2$. {So, $\mathcal A$-duality cannot be analized as $\mathbb{Z}_4-$duality.}
\end{example}
\end{example}

%


\def\cprime{$'$}
\begin{thebibliography}{10}

\bibitem{berger}
T.~P. Berger and N.~E. Amrani.
\newblock {Codes over finite quotients of polynomial rings}.
\newblock {\em Finite Fields and Their Applications}, (25):165--181 (2015),
  2014.

\bibitem{bini}
G.~Bini and F.~Flamini.
\newblock {\em {Finite commutative rings and their applications}}.
\newblock Kluwer Academic Publishers, Boston, MA, 2002.

\bibitem{Cao2013}
Y.~Cao.
\newblock 1-generator quasi-cyclic codes over finite chain rings.
\newblock {\em Applicable Algebra in Engineering, Communication and Computing},
  24(1):53--72, 2013.

\bibitem{cox}
D.~Cox, J.~Little, and D.~O'Shea.
\newblock {\em {Ideals, Varieties, and Algorithms}}.
\newblock Springer, New York, 2007.

\bibitem{CN-FCR}
H.~Q. Dinh and S.~R. L{\'o}pez-Permouth.
\newblock {Cyclic and negacyclic codes over finite chain rings}.
\newblock {\em IEEE Trans. Inform. Theory}, 50(8):1728--1744, 2004.

\bibitem{DOUGHERTY2011205}
S.~T. Dougherty, B.~Yildiz, and S.~Karadeniz.
\newblock {Codes over Rk, Gray maps and their binary images}.
\newblock {\em Finite Fields and Their Applications}, 17(3):205--219, 2011.

\bibitem{Gao2016}
J.~Gao, L.~Shen, and F.-W. Fu.
\newblock {Bounds on quasi-cyclic codes over finite chain rings}.
\newblock {\em Journal of Applied Mathematics and Computing}, 50(1):577--587,
  2016.

\bibitem{Jitman2015}
S.~Jitman and S.~Ling.
\newblock {Quasi-abelian codes}.
\newblock {\em Designs, Codes and Cryptography}, 74(3):511--531, 2015.

\bibitem{Nechaev99}
V.~L. Kurakin, A.~S. Kuzmin, V.~T. Markov, A.~V. Mikhalev, and A.~A. Nechaev.
\newblock {Linear codes and polylinear recurrences over finite rings and
  modules (a survey)}.
\newblock In {\em {Applied algebra, algebraic algorithms and error-correcting
  codes (Honolulu, HI, 1999)}}, volume 1719 of {\em {Lecture Notes in Comput.
  Sci.}}, pages 365--391. Springer, Berlin, 1999.

\bibitem{KuzNec94}
A.~S. Kuz{\cprime}min and A.~A. Nechaev.
\newblock Linearly representable codes and the {K}erdock code over an arbitrary
  {G}alois field of characteristic {$2$}.
\newblock {\em Uspekhi Mat. Nauk}, 49(5(299)):165--166, 1994.

\bibitem{Lally2001}
K.~Lally and P.~Fitzpatrick.
\newblock {Algebraic structure of quasicyclic codes}.
\newblock {\em Discrete Applied Mathematics}, 111(1–2):157--175, 2001.

\bibitem{ling}
S.~Ling and P.~Sol{\'{e}}.
\newblock {On the Algebraic Structure of Quasi-cyclic Codes II: Chain Rings}.
\newblock {\em Designs, Codes and Cryptography}, 30:113--130, 2003.

\bibitem{SergioSteve}
S.~R. L{\'o}pez-Permouth, H.~{\"O}zadam, F.~{\"O}zbudak, and S.~Szabo.
\newblock {Polycyclic codes over {G}alois rings with applications to
  repeated-root constacyclic codes}.
\newblock {\em Finite Fields Appl.}, 19:16--38, 2013.

\bibitem{nuestro6}
E.~Mart{\'i}nez-Moro, A.~P. Nicol{\'a}s, and I.~F. R{\'u}a.
\newblock {Multivariable codes in principal ideal polynomial quotient rings
  with applications to additive modular bivariate codes over $\mathbb{F}_4$}.
\newblock {\em Submitted}, 2015.

\bibitem{nuestro1}
E.~Mart{\'i}nez-Moro and I.~F. R{\'u}a.
\newblock {Multivariable codes over finite chain rings: serial codes}.
\newblock {\em SIAM J. Discrete Math.}, 20(4):947--959, 2006.

\bibitem{nuestro2}
E.~Mart{\'i}nez-Moro and I.~F. R{\'u}a.
\newblock {On repeated-root multivariable codes over a finite chain ring}.
\newblock {\em Designs, Codes and Crypt.}, 45:219--227, 2007.

\bibitem{Martinez-Moro2015}
E.~Mart{\'{i}}nez-Moro and S.~Szabo.
\newblock {On codes over local Frobenius non-chain rings of order 16}.
\newblock volume 634 of {\em Contemp. Math.}, pages 227--241. Amer. Math. Soc.,
  Providence, RI, 2015.

\bibitem{MoroSzaboYildiz}
E.~Mart\'{\i}nez-Moro, S.~Szabo, and B.~Yildiz.
\newblock Linear codes over $\mathbb{Z}_4[x]/\langle x^2+2x\rangle$.
\newblock {\em Int. J. Inf. Coding Theory}, 3(1):78--96, Apr. 2015.

\bibitem{McD}
B.~R. McDonald.
\newblock {\em {Finite rings with identity}}.
\newblock Marcel Dekker, Inc., New York, 1974.
\newblock Pure and Applied Mathematics, Vol. 28.

\bibitem{CGS}
A.~A. Nechaev and D.~A. Mikhailov.
\newblock {A canonical system of generators of a unitary polynomial ideal over
  a commutative {A}rtinian chain ring}.
\newblock {\em Diskret. Mat.}, 13(4):3--42, 2001.

\bibitem{Norton-Salagean}
G.~H. Norton and A.~S\u~al\u agean.
\newblock On the structure of linear and cyclic codes over a finite chain ring.
\newblock {\em Appl. Algebra Engrg. Comm. Comput.}, 10(6):489--506, 2000.

\bibitem{Poli}
A.~Poli and L.~Huguet.
\newblock {\em {Error correcting codes}}.
\newblock Prentice Hall International, Hemel Hempstead, 1992.

\end{thebibliography}

\def\cprime{$'$}

\end{document}